\documentclass[12pt]{article}

\setlength{\oddsidemargin}{0in}
\setlength{\textwidth}{\paperwidth}
\addtolength{\textwidth}{-2in}
\setlength{\topmargin}{-.5in}
\setlength{\textheight}{8.75in}
\usepackage{amsmath,amssymb,amsthm,color,epsfig,mathrsfs}
\usepackage{amsbsy}

\usepackage{tabularx,ragged2e,booktabs,caption}









\newtheorem{theorem}{Theorem}

\newtheorem{lemma}[theorem]{Lemma}


\newcounter{spslist}


\newcounter{geqncount}
    {\refstepcounter{equation}%
     \setcounter{geqncount}{\value{equation}}%
     \setcounter{equation}{0}%
  }%
    {\setcounter{equation}{\value{geqncount}}}



\newcommand{\CC}{\mathbb{C}}

\newcommand{\cchi}[1]{{\textstyle{\chi\big(#1\big)}}}


\bibliographystyle{plain}

\begin{document}

\bibliographystyle{plain} 

\begin{center}
{\bfseries \Large  Spectra of half-infinite quantum graph tubes\footnote[1]{This paper was a senior thesis by Jeremy Tillay under the advisement of Professor Stephen Shipman. A journal article on this subject is in preparation.}}
\end{center}

\vspace{0ex}

\begin{center}
{\scshape \large Jeremy Tillay\\
\vspace{2ex}
{
\itshape
Department of Mathematics\\
Louisiana State University\\
Baton Rouge, Louisiana \ 70803, USA\\
Senior Honors Thesis
}}
\end{center}

\vspace{3ex}
\centerline{\parbox{0.9\textwidth}}
{\bf Abstract.}\
Carbon nanotubes are a feverishly-studied topic in the scientific community as of late. Mathematically, they can be modeled with a quantum graph. Here we consider a structure somewhat similar to carbon nanotubes, another quantum graph tube that is formed by rolling a square lattice instead of a graphene structure. This symmetry imposes properties that make certain motions easier to study by creating convenient pairs of incoming and outgoing motions.

\vspace{3ex}
\noindent
\begin{mbox}
{\bf Key words:}
quantum graph, spectrum, embedded eigenvalue, nanotube, self-adjoint extensions
\end{mbox}
\vspace{3ex}

\hrule
\vspace{1.1ex}

\section{Introduction} 

A quantum graph is a metric graph (a set of vertices and edges where each edge is parametrized by an interval ) equipped with a Schr\"{o}dinger operator that acts on functions defined along the edges of the graph. Consider a two-dimensional square lattice $Q$, treated as a quantum graph equipped with the second derivative operator 
$ H= \partial_{xx} \ $, which is a Schr\"{o}dinger operator with a zero potential. 

\centerline{\scalebox{0.3}{\includegraphics{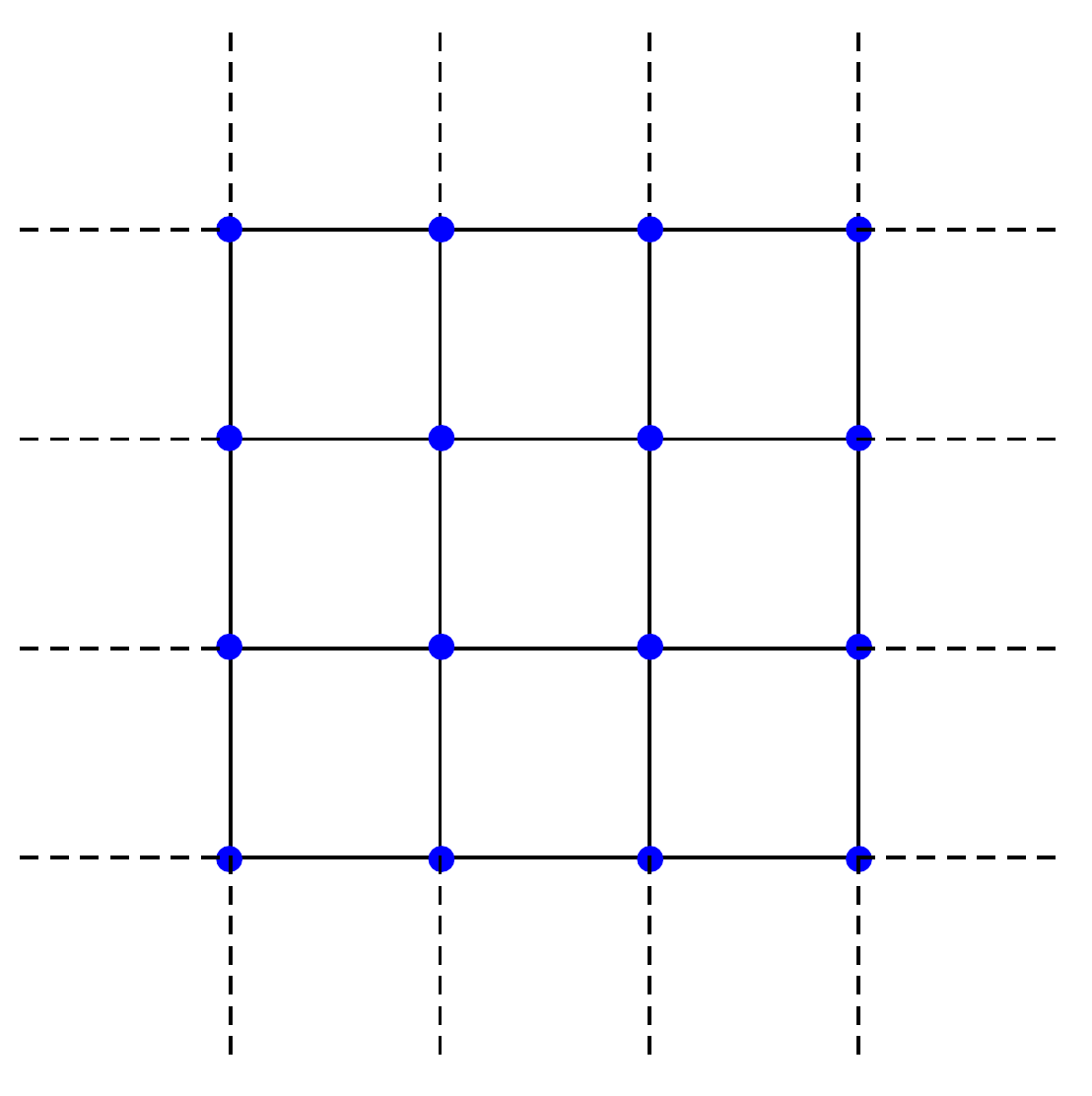}}}

Now we consider the property of self-adjointness. This operator H is self-adjoint with respect to the domain of functions if the equality $\langle Hu, v \rangle \ = \ \langle u, Hv \rangle$ holds for all functions within the specified domain. Note that the inner product of two functions $u$ and $v$ is considered here to be the $L^2[0, 1]$ norm acting on functions in $H^2[0, 1]$
\begin{equation}\label{condition1}
\langle u, v \rangle = \sum\nolimits_{e \in Q} \int_{0}^{1}uv\,dx
\end{equation}
So in order for the second-derivative operator H to be self-adjoint, it must be that
\begin{equation}\label{condition2}
\langle Hu, v \rangle = \langle u, Hv\rangle
\end{equation}
Or equivalently: 
\begin{equation}\label{condition3}
\sum\nolimits_{e \in Q} \int_{0}^{1}u''vdx= \sum\nolimits_{e \in Q} \int_{0}^{1}v''udx
\end{equation}
Through intergration by parts, expression (\ref{condition3}) can be simplified to:
\begin{equation}\label{condition4}
\sum\nolimits_{e \in Q}u'v(1)-u'v(0)= \sum\nolimits_{e \in Q} v'u(1) - v'u(0) 
\end{equation}

Now consider all functions satisfying continuity and vertex conditions such that at each vertex $x$, $u_e(x) = u_{e'}(x)$ where $e$ and $e'$ are any edges that meet at $x$
and the outgoing derivatives from the vertex satisfy: 
$\sum\nolimits_{e} u'_e (x) = 0$

Keep in mind that the derivatives must be normalized with a positive or negative sign to assure they describe outgoing motion. For example $u_e(1) $is the ingoing energy at some vertex parametrized by 1 along that edge $e$. So $-u_e(1)$ actually defines the energy outgoing from that point.

It is now true that at each vertex $\sum\nolimits_{e} u'v(x)  = v(x)\sum\nolimits_{e} u'_e (x) = v(x)\cdot0 = 0 $ . So both sides of (\ref{condition4}) are 0 at each vertex. Summing over all vertices in the graph makes it clear that the self-adjointness condition is satisfied for all functions satisfying these continuity and flux conditions. 

Since we are modeling free vibrations on these tubes, we can assume that there will be a wave motion along each edge, so it will be a linear combination of $ \sin{kx} $ and 
$ \cos{kx} $. This $k$ is constant through the graph, as it corresponds to innate qualities like tension, material composition of the wires that make up the square sheet, air quality surrounding the tube. Since this tube is considered to be perfectly symmetric and lie in a vacuum, the value $k$ does not change between edges.

Note this also makes $u$ an eigenfunction with eigenvalue $k^2$. If $u_e(x) = A \sin{kx}+ B \cos{kx}$ along each edge, then $Hu = -d_{xx} u = k^2 u(x)$

\centerline{\scalebox{0.3}{\includegraphics{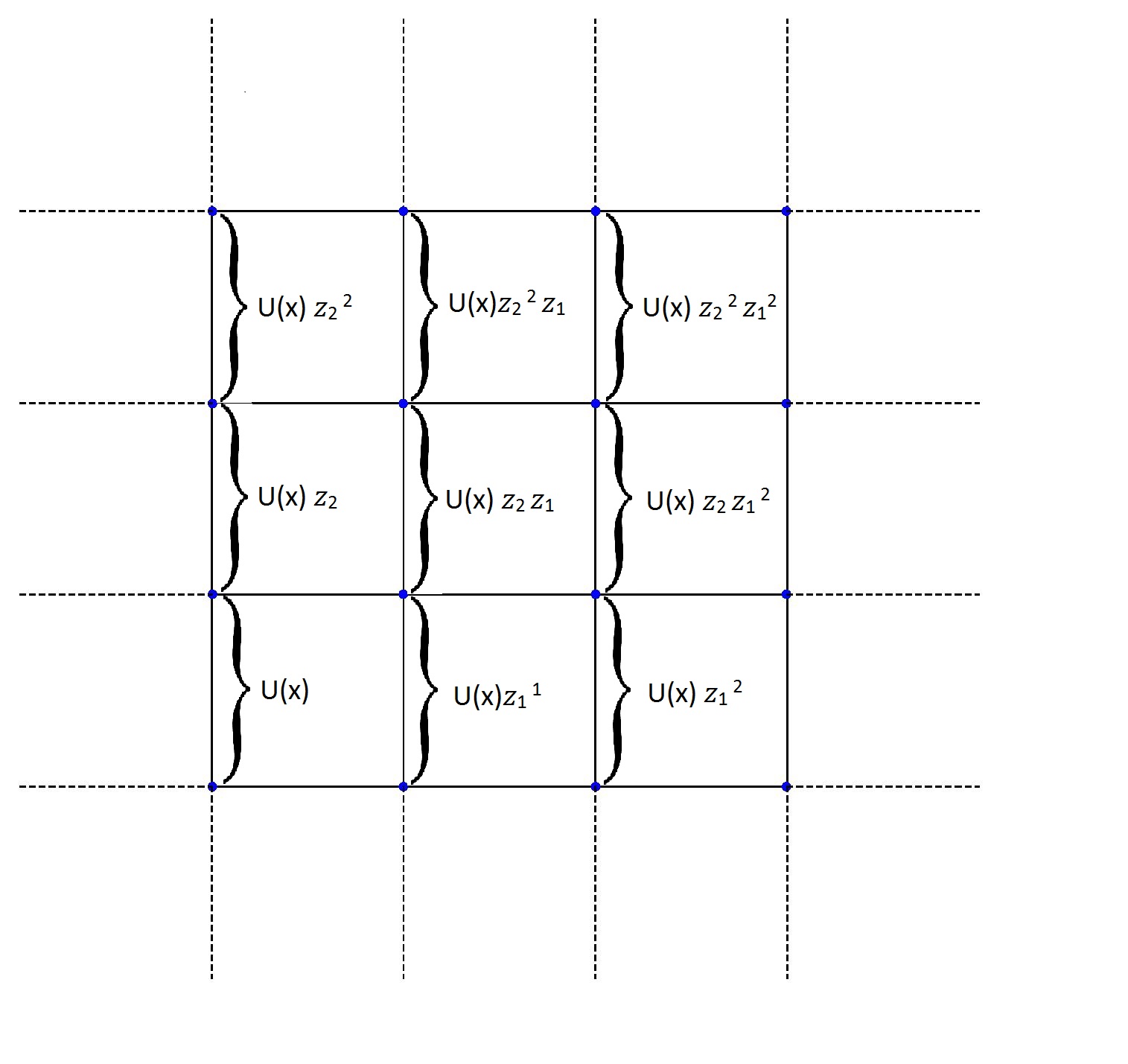}}}

Finally we consider the notion of Floquet multipliers, which will complete our construction of this quantum sheet. Since this structure is perfectly symmetric, it would make sense if there was a consistent phase and amplitude shift when moving vertically and horizontally. So we consider just that, a pair of multipliers  $z_2$ corresponding to vertical motion upwards and $z_1$ corresponding to horizontal motion to the right. Thus with knowledge of just the domain pictured below (called the fundamental domain), consisting of one vertex and two perpendicular edges, we can multiply by powers of $z_1$ and $z_2$ to get the corresponding vibrations at at any point in the graph.

\centerline{\scalebox{0.3}{\includegraphics{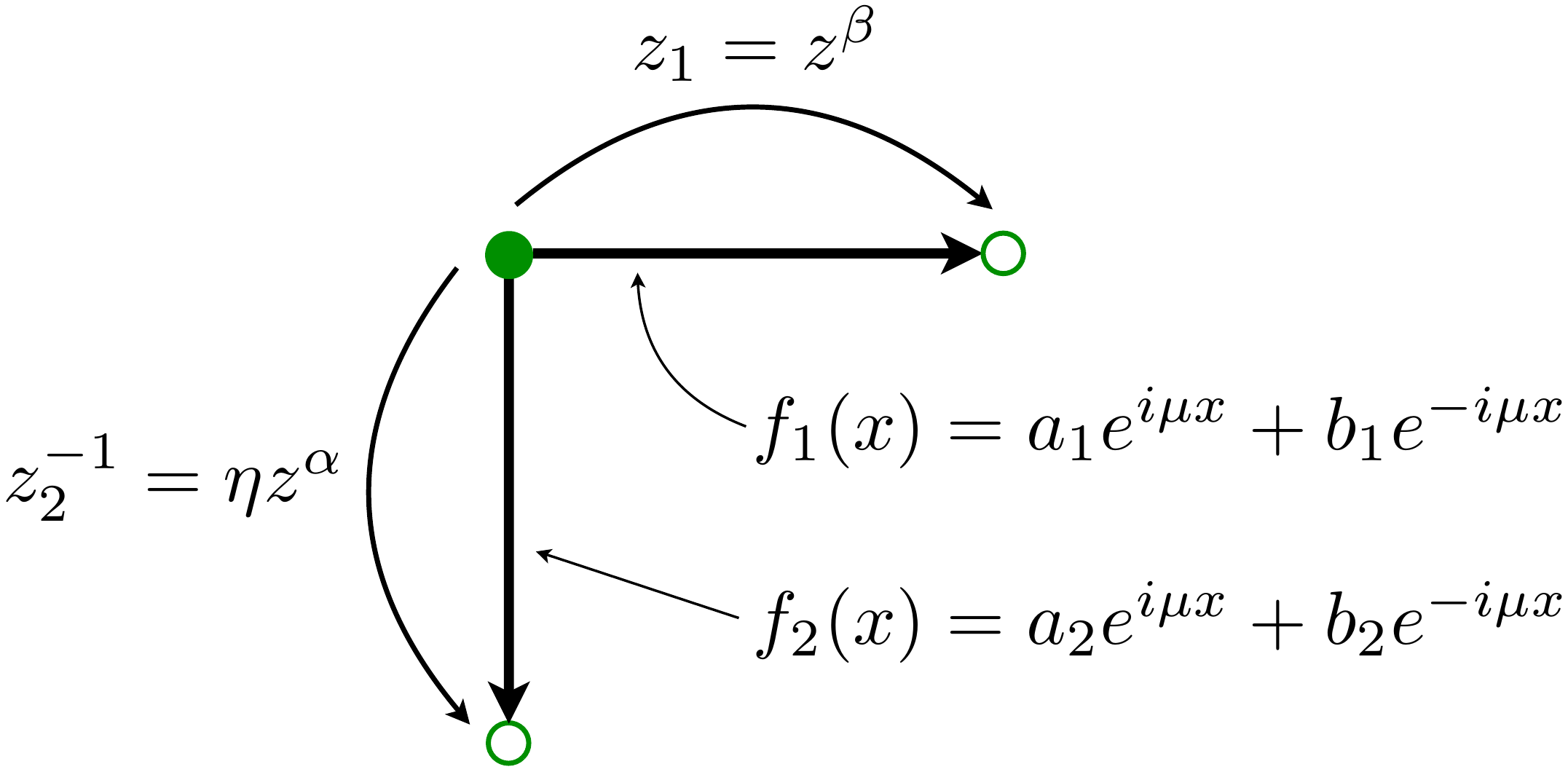}}}

Note that for ease of  calculations, we consider for most of this paper that the motion on the edges is a linear combination of $e^{ikx}$ and $e^{-ikx}$ instead of sine and cosine. The two representations of motion are equal up to an isomorphism. 

Combining all of the things just discussed, including our vertex conditions, Floquet multipliers, and assumption that each edge is a combination of sine and cosine motions, we can definitively solve for what combinations of motions and what multipliers are possible.

Now assume that the horizontal edge has vibrational equation $a_1e^{ikx} + b_1e^{-ikx}$ where $x = 0$ at the solid green node and approaches 1 as it moves right. The vertical edge has vibrational equation $a_2e^{ikx} + b_2e^{-ikx}$ where, again, $x = 0$ at the solid green node and approaches $1$ as it moves down.
 
 This leads to the following 4 equations: 3 matching and 1 flux condition as follows

\begin{equation} \label{lineqn1}
a_1 + b_1 = a_2 + b_2 
\end{equation}
\begin{equation} \label{lineqn2}
a_1e^{ik} + b_1e^{-ik}= z_1 (a_1 + b_1)
\end{equation}
\begin{equation}\label{lineqn3}
a_2 e^{ik} + b_2e^{-ik}= z_2 (a_2 + b_2)
\end{equation}
\begin{equation} \label{lineqn4}
a_1 - b_1 + a_2 - b_2 -  z_2 (a_2 e^{ik} - b_2e^{-ik}) - z_1^{-1}(a_1e^{ik} - b_1e^{-ik}) = 0.
\end{equation}

Conditions (\ref{lineqn1}-\ref{lineqn4}) are matching conditions, where continuity must be established at the solid green node. Condition 4 is the flux condition. 

It is clear that if these conditions are met on this fundamental domain, they will be met everywhere. Every other vertex will have values that are just scaled by a constant. So if there is continuity at this vertex, there will be continuity still after we multiply each of equations by a constant. The same goes for the flux condition. Multiplying both sides of the equations by $z_1^k z_2^l$ will not suddenly make their equality untrue. So if matching and flux are met on this vertex, they are met by all vertices. 

These conditions can be expressed as a linear equation of the form

\begin{equation}
\renewcommand{\arraystretch}{1.3}
\left[
  \begin{array}{cccc}
    1 & 1 & -1 & -1 \\
    z_1-e^{ik} & z_1\!-\!e^{-ik}  & 0 & 0 \\
    0 & 0 & z_2^{-1}\!-\!e^{ik}  & z_1^{-1}\!-\!e^{-ik}  \\
    1\!-\!e^{ik}  z_1^{-1} & 0 & 1\!-\!e^{ik} z_2 & 0
  \end{array}
\right]
\left[
  \begin{array}{c}
    a_1 \\ b_1 \\ a_2 \\ b_2  
  \end{array}
\right]
=
\left[
  \begin{array}{c}
    0 \\ 0 \\ 0 \\ 0
  \end{array}
\right]
\end{equation}

This is a 4x4 matrix with determinant:

$ \sin{k} * (z_1 + z_1^{-1} + z_2 + z_2^{-1}-4 \cos{k})$ 

Note that if $ \sin{k} = 0$, then $k = l*\pi$ for some integer l. In other words, $e^{ik} = (-1)^l$. So the movement from one vertex to one on the opposite side of the edge is just 1 or -1. This is a special case, and we will assume that $k$ is not one of these values for the duration of the paper, but such a possible setup is worth noting regardless.

We will soon seek solutions ${z_1, z_2}$ such that $z_1 + z_1^{-1} + z_2 + z_2^{-1}= 4 \cos{k}$, where, when satisfied, the coefficients are 

\begin{equation}\label{coefficients}
  \renewcommand{\arraystretch}{1.1}
\left[
  \begin{array}{c}
      a_1 \\ b_1 \\ a_2 \\ b_2
  \end{array}
\right]
=\,
c \left[
  \begin{array}{c}
      z_1-\zeta^{-1} \\ -z_1 + \zeta \\ z_2^{-1} - \zeta^{-1} \\ -z_2^{-1} + \zeta
  \end{array}
\right].
\end{equation}

First, however, we must consider the wrapping of this sheet in order to model a tube.

One can easily picture how to wrap this tube. Picture being an ant sitting on this sheet. Now pick a direction on the 2-D square sheet represented by the vector $\langle \alpha, \beta \rangle$. Face that direction and imagine someone cuts off all of the sheet behind directly behind you. Now start marching and at some arbitrary point in your path, I will roll the sheet so that the point where you started is glued directly under the point on your path that I chose. Then I will roll the entire sheet in the same pattern, so that you will still think you are marching in a straight line, but I have really made it so that you are moving along the tube and are periodically returning to the same point you started, though on a different layer of the tube. 

Note: This folding will have to perfectly align two points on your path or you will spiral down the tube forever. 

Now imagine instead of traveling on a square plane, you are on this quantum sheet with wires. You don't want to fall off, so you travel in a fixed pattern along vertical and horizontal edges. So you move from one edge to the edge directly above you $\alpha$ times, and then to the edge directly to the right $\beta$ times. Downward or leftward motion corresponds to $-\alpha$ or $-\beta$ respectively. 

Now wrap the tube so that the points $(0, 0)$ and $(\alpha, \beta)$ are "glued" together(and furthermore, all points of the form $(x + m\alpha, y + m\beta), m \in \mathbb{Z}, (x, y) \in Q$ are "glued" together). On this infinitely thin tube, since there is continuity, each of the points on the tube must be vibrating in exactly the same way now that we have "glued" them. Since we have an assumption about Floquet multipliers, the mathematical expression of the physical idea of gluing or folding the tube is naturally defined as follows: 

$u(x + \alpha, y + \beta) = u(x,y)$ is the new matching or "gluing" relation and
$u(\alpha, \beta) = u(0, 0)z_1^{\alpha} z_2^{\beta}$ is the Floquet multiplier condition.

This must hold not just at the vertices, but at each point along every edge. So even if $u(x, y) = u(x + \alpha, y+ \beta) = 0$, at some point (x, y) in the grid, then this equality will not hold along the entire graph unless the function $u$ is zero throughout the entire graph (a trivial case) or $z_1^{\alpha}z_2^{\beta} = 1$

Now we have two unknowns, $z_1, z_2$ and two conditions:
\begin{equation}\label{zequation1}
\begin{array}{c}
z_1^\alpha z_2^\beta = 1 \\
z_1 + z_1^{-1}+ z_2 + z_2^{-1} = 4 \cos{k}.
\end{array}
\end{equation}

This is not a linear system, but it turns out we can still solve it for a set of $2\beta$ pairs of Floquet multipliers  ${z_1, z_2}$ 

It is convenient to assume that $gcd(\alpha, \beta) = 1$ and write the above equality as $z_1^{d\alpha}z_2^{d\beta}$.

The condition $z_1^{\alpha\delta} z_2^{\beta\delta} = 1$ is equivalent to 
\begin{equation}
  z_1^\alpha z_2^\beta = \cchi{\frac{\ell}{\delta}}
  \quad \text{for some integer } \ell:\,0\leq\ell<\delta,
\end{equation}
in which $\chi(t) = e^{2\pi it}$.  Thus (\ref{zequation1}) is equivalent to the existence of $\ell:\,0\leq\ell<\delta$ such that
\begin{equation}\label{zequation2}
  \renewcommand{\arraystretch}{1.1}
\left\{
  \begin{array}{l}
    z_1 + z_1^{-1} + z_2 + z_2^{-1} = 4\cos\mu \\
    z_1^\alpha z_2^\beta = \chi\!\left( \frac{\ell}{\delta} \right). \\
  \end{array}
\right.
\end{equation}

The following lemma characterizes the solutions of (\ref{zequation2}).

\begin{lemma}\label{lemma:zequation}
If $\alpha$ and $\beta$ are positive integers with $\gcd(\alpha,\beta)=1$ and $(z_1,z_2)\in(\CC^*)^2$, then (\ref{zequation2}) holds if and only if there exists $z\in\CC^*$ such that
\begin{equation}\label{zequation3}
  \renewcommand{\arraystretch}{1.1}
\left\{
  \begin{array}{l}
    z^\beta + z^{-\beta} + \eta z^\alpha + \eta^{-1} z^{-\alpha} = 4\cos\mu\,,\; \text{ with } \eta=\cchi{\frac{-\ell}{\beta\delta}} \\
    z_1 = z^{\beta} \\
    z_2 = \eta^{-1} z^{-\alpha}\,.
  \end{array}
\right.
\end{equation}
Such $z$ is unique.  The same pair $(z_1,z_2)$ satisfies the modification of the system (\ref{zequation3}) by the replacements
\begin{eqnarray}
  z \mapsto \cchi{\frac{j}{\beta}}z,
  \quad
  \eta \mapsto \cchi{\frac{-j\alpha}{\beta}}\eta,
\end{eqnarray}
featuring the isomorphism $\cchi{\frac{j}{\beta}}\mapsto\cchi{\frac{-j\alpha}{\beta}}$ of the group of $\beta^\text{th}$ roots of $1$.

If $z$ satisfies the first equation in the system (\ref{zequation3}) and $|z|\not=1$, then $z$ is a simple root of the equation.
\end{lemma}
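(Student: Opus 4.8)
I would dispose of the first three assertions by direct computation. Since $\gcd(\alpha,\beta)=1$, fix integers $p,q$ with $p\alpha+q\beta=1$. Given a solution $(z_1,z_2)$ of (\ref{zequation2}), set $z=\cchi{\ell p/(\beta\delta)}\,z_1^{\,q}z_2^{-p}$; using $z_1^\alpha z_2^\beta=\cchi{\ell/\delta}$ one verifies $z^\beta=z_1$ and $\eta^{-1}z^{-\alpha}=z_2$, after which the first line of (\ref{zequation3}) is literally the first line of (\ref{zequation2}); conversely, if $z$ solves (\ref{zequation3}) then $z_1^\alpha z_2^\beta=\eta^{-\beta}=\cchi{\ell/\delta}$ and the two displayed equations match. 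Uniqueness of $z$ is immediate: two candidates $z,z'$ satisfy $(z/z')^\alpha=(z/z')^\beta=1$, hence $z/z'=(z/z')^{p\alpha+q\beta}=1$. For the symmetry statement, substitute $z\mapsto\cchi{j/\beta}z$, $\eta\mapsto\cchi{-j\alpha/\beta}\eta$ and observe that $\cchi{j/\beta}^{\beta}=1$ and $\cchi{-j\alpha/\beta}\cchi{j\alpha/\beta}=1$ leave $z_1$, $z_2$, and the left side of the first equation unchanged; that $j\mapsto-j\alpha$ permutes $\ZZ/\beta\ZZ$ (since $\alpha$ is a unit mod $\beta$) is precisely the claimed automorphism of the group of $\beta^{\text{th}}$ roots of unity.

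The simplicity claim is where the real effort goes. Set $f(z)=z^\beta+z^{-\beta}+\eta z^\alpha+\eta^{-1}z^{-\alpha}-4\cos\mu$. Because $|\eta|=1$ and $\mu\in\RR$, $f$ obeys the conjugate-reciprocal identity $f(z)=\overline{f(1/\bar z)}$, so its zeros occur in pairs $z\leftrightarrow1/\bar z$ with equal multiplicity; it therefore suffices to rule out a multiple zero $z_0$ with $|z_0|=\rho>1$. Put $A=z_0^{\,\beta}$, $B=\eta z_0^{\,\alpha}$, $u=\rho^{-2\beta}$, $v=\rho^{-2\alpha}\in(0,1)$, so that $z_0^{-\beta}=u\bar A$ and $\eta^{-1}z_0^{-\alpha}=v\bar B$. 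The two conditions $f(z_0)=0$ and $z_0f'(z_0)=0$ become $A+u\bar A+B+v\bar B=4\cos\mu$ and $\beta(A-u\bar A)+\alpha(B-v\bar B)=0$. Taking imaginary parts of both (the right-hand side being real) yields a homogeneous linear system in $(\mathrm{Im}\,A,\mathrm{Im}\,B)$ with determinant $D=\alpha(1-u)(1+v)-\beta(1-v)(1+u)$, while the real part of the second condition gives $\beta(1-u)\,\mathrm{Re}\,A+\alpha(1-v)\,\mathrm{Re}\,B=0$.

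The endgame is a case split on $D$, and this is the step I expect to be the genuine obstacle. If $D\neq0$, then $\mathrm{Im}\,A=\mathrm{Im}\,B=0$, so $A,B$ are real with moduli $\rho^\beta,\rho^\alpha$; the real equation then forces, after bookkeeping of signs and using $\rho>1$, that $\beta(\rho^\beta-\rho^{-\beta})=\alpha(\rho^\alpha-\rho^{-\alpha})$, and since $m\mapsto m(\rho^m-\rho^{-m})$ is strictly increasing on $(0,\infty)$ this gives $\alpha=\beta$, hence $\alpha=\beta=1$ — but that makes $u=v$ and $D=0$, a contradiction. If $D=0$, then clearing denominators and setting $\rho^2=e^{2s}$ recasts $D=0$ as $(\beta-\alpha)\sinh((\alpha+\beta)s)=(\alpha+\beta)\sinh((\beta-\alpha)s)$; because $x\mapsto\sinh(xs)/x$ is strictly monotone on $(0,\infty)$ for each fixed $s\neq0$ (its $x$-derivative has the sign of $s$, since $y\cosh y-\sinh y$ has the sign of $y$) and the two exponents involved are, up to relabelling $\alpha$ and $\beta$, distinct positive integers, this forces $s=0$, i.e.\ $\rho=1$. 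Finally the degenerate case $\alpha=\beta=1$ (where $D$ vanishes identically) is handled by hand: there $f(z)=0$ is the quadratic $(1+\eta)z^2-4\cos\mu\,z+(1+\eta^{-1})=0$, whose leading and constant coefficients have equal modulus because $1+\eta^{-1}=\overline{1+\eta}$, so any double root satisfies $|z_0|=1$. I would state the two monotonicity facts as small lemmas before running this case analysis, and I would also check carefully that the $z\mapsto1/\bar z$ symmetry preserves the order of a zero, so that the reduction to $\rho>1$ is legitimate.
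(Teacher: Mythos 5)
Your proof is correct, and on the first three assertions it is essentially the paper's argument in a more explicit dress: the paper also derives existence and uniqueness of $z$ from $\gcd(\alpha,\beta)=1$ via a B\'ezout relation (it picks arbitrary roots $w_1,w_2$ with $w_1^\beta=z_1$ and $\eta^{-1}w_2^{-\alpha}=z_2$, notes $(w_1w_2^{-1})^{\alpha\beta}=1$, and then adjusts by roots of unity using $r=-m\alpha+n\beta$, where you instead write the closed formula $z=\cchi{\ell p/(\beta\delta)}z_1^{q}z_2^{-p}$), and its uniqueness argument is word-for-word yours; the symmetry claim is not argued in the paper at all, so your one-line verification is a small addition. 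The genuine divergence is in the simplicity claim. The paper uses only the derivative condition $\beta\left(z^\beta-z^{-\beta}\right)=-\alpha\left(\eta z^\alpha-\eta^{-1}z^{-\alpha}\right)$ and observes that for fixed $|z|=\rho\neq1$ the two sides lie on concentric coaxial ellipses with semi-axes $\beta(\rho^\beta+\rho^{-\beta})$, $\beta\big|\rho^\beta-\rho^{-\beta}\big|$ and $\alpha(\rho^\alpha+\rho^{-\alpha})$, $\alpha\big|\rho^\alpha-\rho^{-\alpha}\big|$; for $\beta>\alpha$ one ellipse strictly contains the other, so the two sides can never be equal and $f'$ has no zeros at all off the unit circle. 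That is shorter than your real/imaginary decomposition plus the monotonicity of $\sinh(xs)/x$, and it proves something slightly stronger (no critical points rather than merely no double roots), but it silently assumes $\beta>\alpha$ and therefore says nothing about $\alpha=\beta=1$ --- precisely the case your quadratic computation covers. Your route needs both $f(z_0)=0$ and $f'(z_0)=0$ and the case split on $D$, so it is heavier, but it is complete where the paper's is not. One residual pinprick in your degenerate case: when $\alpha=\beta=1$ and $\eta=-1$ the ``quadratic'' collapses to the constant $-4\cos\mu$, so that sub-case should be disposed of separately (there are then no roots at all unless $\cos\mu=0$, when the statement becomes vacuous); the paper does not address this either.
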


\begin{proof}
To prove that (\ref{zequation3}) implies (\ref{zequation2}) is straightforward.
  To prove the uniqueness of the number $z\in\CC^*$ that satisfies (\ref{zequation3}), suppose $\eta^{-1}z^{-\alpha}=\eta^{-1}w^{-\alpha}$ and $z^\beta=w^\beta$ for some $z$ and $w$ in $\CC^*$.  Then $(zw^{-1})^\alpha=1=(zw^{-1})^\beta$.  Since $\gcd(\alpha,\beta)=1$, $zw^{-1}=1$, so that $z=w$.
  
  To prove the existence of such $z$ under the assumption of (\ref{zequation2}), suppose that $z_1^\alpha z_2^\beta=\zeta:=\cchi{\frac{\ell}{\delta}}$, and let $\eta$ be such that $\eta^{-\beta}=\zeta$.  Then choosing numbers $w_1$ and $w_2$ such that $z_1=w_1^\beta$ and $z_2=\eta^{-1}w_2^{-\alpha}$ yields $(w_1w_2^{-1})^{\alpha\beta}=1$ and therefore $w_1w_2^{-1}=\cchi{\frac{r}{\alpha\beta}}$ for some integer $r$.  Since $\gcd(\alpha,\beta)=1$, there exist integers $m$ and $n$ such that $r=-m\alpha + n\beta$, or
\begin{equation}
  \frac{r}{\alpha\beta} = -\frac{m}{\beta} + \frac{n}{\alpha}\,.
\end{equation}
This implies that
\begin{equation}
  w_1w_2^{-1} = \cchi{\frac{-m}{\beta}} \cchi{\frac{n}{\alpha}},
\end{equation}
so that $w_1\cchi{\frac{m}{\beta}}=w_2\cchi{\frac{n}{\alpha}}:=z$.  This is the desired number since
$z^\beta=w_1^\beta=z_1$ and $\eta^{-1}z^{-\alpha}=\eta^{-1}w_2^{-\alpha} = z_2$ and first equation of (\ref{zequation2}) becomes the first equation of (\ref{zequation3}).

The condition for a root $z$ of the Laurent polynomial in (\ref{zequation3}) to be a multiple root is
\begin{equation}\label{multipleroot}
  \beta\left( z^\beta - z^{-\beta} \right) = -\alpha\left( \eta z^\alpha - \eta^{-1} z^{-\alpha} \right).
\end{equation}
Both sides of this equation lie on ellipses in $\CC$ with vertical major axis.
The left side lies on an ellipse with major radius $\beta\left( |z|^\beta + |z|^{-\beta} \right)$ and minor radius $\beta\left| |z|^\beta - |z|^{-\beta} \right|$, and the right side lies on an ellipse with major radius $\alpha\left( |z|^\alpha + |z|^{-\alpha} \right)$ and minor radius $\alpha\big| |z|^\alpha - |z|^{-\alpha} \big|$.  The assumptions $\beta>\alpha$ and $|z|\not=1$ imply
$\beta\left( |z|^\beta + |z|^{-\beta} \right)>\alpha\left( |z|^\alpha + |z|^{-\alpha} \right)$
and $\beta\big| |z|^\beta - |z|^{-\beta} \big|>\alpha\big| |z|^\alpha - |z|^{-\alpha} \big|$,
so that the two ellipses do not intersect.  Thus (\ref{multipleroot}) cannot hold and $z$ is therefore a simple~root.
\end{proof}

\begin{theorem}\label{thm:zequation}
If $\alpha$ and $\beta$ are positive integers with $\gcd(\alpha,\beta)=1$ and $\delta$ is a positive integer, then
the set of solutions $(z_1,z_2)\in(\CC^*)^2$ to the system (\ref{zequation1}) is the {\em disjoint} union
\begin{equation}
  \bigcup\limits_{\ell=0}^{\delta-1} {\mathcal Z}_\ell
\end{equation}
of the solutions sets of (\ref{zequation2}),
\begin{equation}\label{union}
  {\mathcal Z}_\ell \,=\,
  \left\{ (z^\beta, \eta^{-1}z^{-\alpha})\,:\,
    z^\beta + z^{-\beta} + \eta z^\alpha + \eta^{-1} z^{-\alpha} = 4\cos\mu,\; z\in\CC^*,\; \eta=\cchi{\frac{-\ell}{\beta\delta}}
  \right\}.
\end{equation}
\end{theorem}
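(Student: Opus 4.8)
The plan is to treat Theorem~\ref{thm:zequation} as a repackaging of Lemma~\ref{lemma:zequation}: the lemma already converts each system (\ref{zequation2}) into the single Laurent-polynomial condition (\ref{zequation3}), so what remains is to see that, as $\ell$ ranges over $\{0,\dots,\delta-1\}$, these systems exactly partition the solution set of (\ref{zequation1}). Recall from the discussion preceding the lemma that the lone algebraic constraint in (\ref{zequation1}) is $z_1^{\alpha\delta}z_2^{\beta\delta}=1$, and that $z_1^{\alpha\delta}z_2^{\beta\delta}=1$ holds precisely when $z_1^\alpha z_2^\beta=\cchi{\ell/\delta}$ for some $\ell$ with $0\le\ell<\delta$, the index $\ell$ being uniquely determined because $\cchi{\cdot}$ is injective on $[0,1)$. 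I would record this once and then prove two inclusions and disjointness.

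For ``$\subseteq$'': given a solution $(z_1,z_2)\in(\CC^*)^2$ of (\ref{zequation1}), pick the unique $\ell\in\{0,\dots,\delta-1\}$ with $z_1^\alpha z_2^\beta=\cchi{\ell/\delta}$; then $(z_1,z_2)$ satisfies (\ref{zequation2}) for this $\ell$ (the cosine equation is common to both systems), so Lemma~\ref{lemma:zequation} yields $z\in\CC^*$ obeying (\ref{zequation3}) with $\eta=\cchi{-\ell/(\beta\delta)}$, which is precisely the statement that $(z_1,z_2)=(z^\beta,\eta^{-1}z^{-\alpha})$ with $z^\beta+z^{-\beta}+\eta z^\alpha+\eta^{-1}z^{-\alpha}=4\cos\mu$, i.e. $(z_1,z_2)\in{\mathcal Z}_\ell$. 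For ``$\supseteq$'': any $(z_1,z_2)\in{\mathcal Z}_\ell$ is $(z^\beta,\eta^{-1}z^{-\alpha})$ for some $z$ satisfying the Laurent identity in (\ref{union}), so by the ``straightforward'' direction (\ref{zequation3})$\Rightarrow$(\ref{zequation2}) of the lemma it satisfies (\ref{zequation2}) for that $\ell$; raising $z_1^\alpha z_2^\beta=\cchi{\ell/\delta}$ to the power $\delta$ gives $z_1^{\alpha\delta}z_2^{\beta\delta}=1$, and the cosine equation is already in hand, so $(z_1,z_2)$ solves (\ref{zequation1}).

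Disjointness is the one point that is not a direct quotation of the lemma, but it is immediate once one notes that the label $\ell$ is visible on $(z_1,z_2)$ itself: for $(z_1,z_2)=(z^\beta,\eta^{-1}z^{-\alpha})\in{\mathcal Z}_\ell$ with $\eta=\cchi{-\ell/(\beta\delta)}$ one computes
\[
  z_1^\alpha z_2^\beta \;=\; z^{\alpha\beta}\,\eta^{-\beta}\,z^{-\alpha\beta} \;=\; \eta^{-\beta} \;=\; \cchi{\ell/\delta},
\]
and since $\ell\mapsto\cchi{\ell/\delta}$ is injective on $\{0,\dots,\delta-1\}$, membership in ${\mathcal Z}_\ell$ forces that value of $z_1^\alpha z_2^\beta$; hence ${\mathcal Z}_\ell\cap{\mathcal Z}_{\ell'}=\varnothing$ whenever $\ell\ne\ell'$.

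I do not anticipate a genuine obstacle; the substance of the theorem sits in Lemma~\ref{lemma:zequation}. The only thing to be careful about is the convention that the single algebraic equation in (\ref{zequation1}) is the $\delta$-th-power relation $z_1^{\alpha\delta}z_2^{\beta\delta}=1$ coming from gluing along $(\alpha\delta,\beta\delta)$ with $\gcd(\alpha,\beta)=1$; making that explicit is what turns the solution set into the $\delta$-fold branching that the disjoint union $\bigcup_{\ell=0}^{\delta-1}{\mathcal Z}_\ell$ describes.
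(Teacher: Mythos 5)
Your proposal is correct and follows essentially the same route as the paper: the union is obtained by combining Lemma~\ref{lemma:zequation} with the equivalence between $z_1^{\alpha\delta}z_2^{\beta\delta}=1$ and $z_1^\alpha z_2^\beta=\cchi{\frac{\ell}{\delta}}$ for a unique $\ell\in\{0,\dots,\delta-1\}$, and disjointness comes from the fact that $\eta^{-\beta}=\cchi{\frac{\ell}{\delta}}$ is read off from $(z_1,z_2)$. Your disjointness step is a slightly more direct packaging (computing the invariant $z_1^\alpha z_2^\beta=\eta^{-\beta}$ once, rather than comparing two witnesses $z,w$ and deducing $(\eta\nu^{-1})^\beta=1$ as the paper does), but the underlying computation is identical, and you are right to flag the notational convention that the gluing relation in (\ref{zequation1}) is to be read as the $\delta$-th-power relation.
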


\begin{proof}
Because the system (\ref{zequation1}) is equivalent to the existence of an integer $\ell:0\leq\ell<\delta$ such that (\ref{zequation2}) holds, Lemma~\ref{lemma:zequation} establishes the union (\ref{union}).  To prove that it is disjoint, suppose that
\begin{equation}
  \renewcommand{\arraystretch}{1.1}
\left.
  \begin{array}{r}
    z_1=z^{\beta} = w^{\beta} \\
    z_2 = \eta^{-1} z^{-\alpha} = \nu^{-1} w^{-\alpha}
  \end{array}
\right\}
\quad\text{with $\eta=\cchi{\frac{-\ell}{\beta\delta}}$ and $\nu=\cchi{\frac{-k}{\beta\delta}}$}
\end{equation}
and $0\leq\ell<\delta$ and $0\leq k<\delta$.
One has $\eta\nu^{-1}=(wz^{-1})^\alpha$ and $(wz^{-1})^\beta=1$, so that
$(\eta\nu^{-1})^\beta = (wz^{-1})^{\alpha\beta} = 1^\alpha = 1$, and therefore $\cchi{\frac{k-\ell}{\delta}}=1$.  Since $|k-\ell|<\delta$, one has $\left| \frac{k-\ell}{\delta} \right|<1$, which together with $\cchi{\frac{k-\ell}{\delta}}=1$ yields $\frac{k-\ell}{\delta}=0$ and hence $\eta=\nu$.
\end{proof}

For the duration of our discussion, we assume $gcd(\alpha, \beta) = 1$ and, without loss of generality, $\beta \geq \alpha$

Now that we have folded the tube and assured that is satisfies certain mathematial expressions of physical reality, we are interested in imposing a defect to allow scattering and some potentially interesting interactions. The natural defect to impose is simply a cut along the vector of periodicity $\langle \alpha, \beta \rangle$ 

\centerline{\scalebox{0.3}{\includegraphics{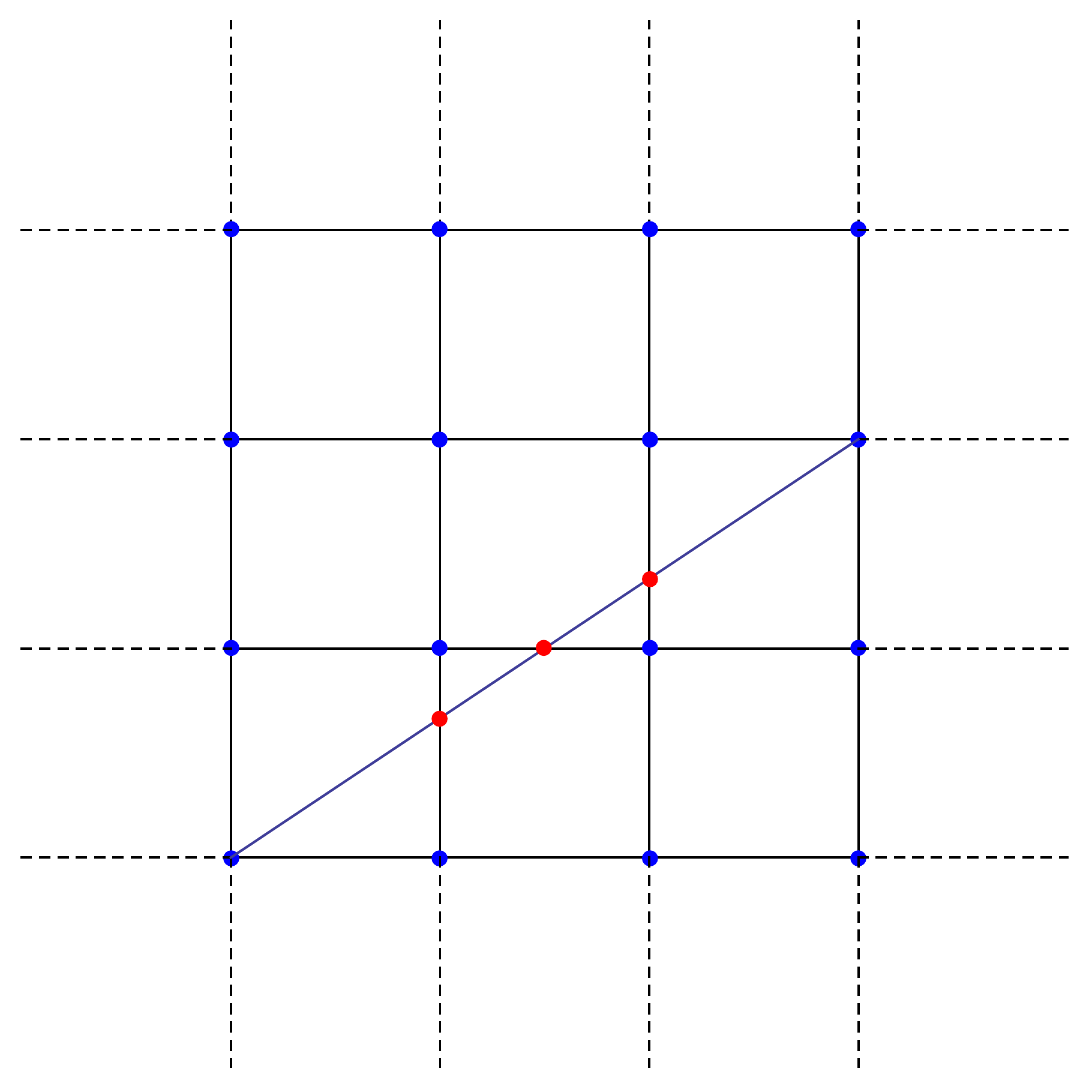}}}

In doing so, the flux condition is no longer satisfied at the three red dots and two blue dots on the line because the graph ceases to exist above and to the left of the cut. So we must impose new conditions that model a tube existing in a closed system, so that no energy is leaking out or spontaneously being created. Such a physical idea can be expressed with the following condition: 

Let $U$ be a unitary matrix. $F$ is a vector of values at the nodes as pictured above and $F'$ is a vector of the derivative at those nodes. 

$U\mathbf{(F + iF')} = \mathbf{F-iF'}$ 

If this is satisfied for some unitary matrix U, then the boundary conditions at the new vertices where we cut the tube are said to be self-adjoint.

Note that this holds true for some unitary matrix if and only if 

\[
\lVert \mathbf{F+iF} \rVert =\lVert \mathbf{F-iF} \rVert 
\]

This is a set of $\alpha + \beta$ boundary conditions because there will be $\alpha -1$ vertical intersections between (0, 0) to $(\alpha, \beta)$ and $\beta-1$ vertical intersections as well as 2 nodes to consider at (0, 0) along the horizontal edge and $(\alpha, \beta)$ at the vertical edge .

However, there will only be $2\beta$ modes coming from the solution to the dispersion and tube relations earlier. We take $\beta$ modes of the form ${z_1, z_2}$ to be input motions and the corresponding $\beta$ modes of the form ${z_1^{-1}, z_2^{-1}}$ motions to be output. This leaves a linear equation with $\beta$ unknowns and $\alpha+\beta$ conditions. This is not necessarily solvable, so we must find $\alpha$ degrees of freedom. 

\centerline{\scalebox{0.3}{\includegraphics{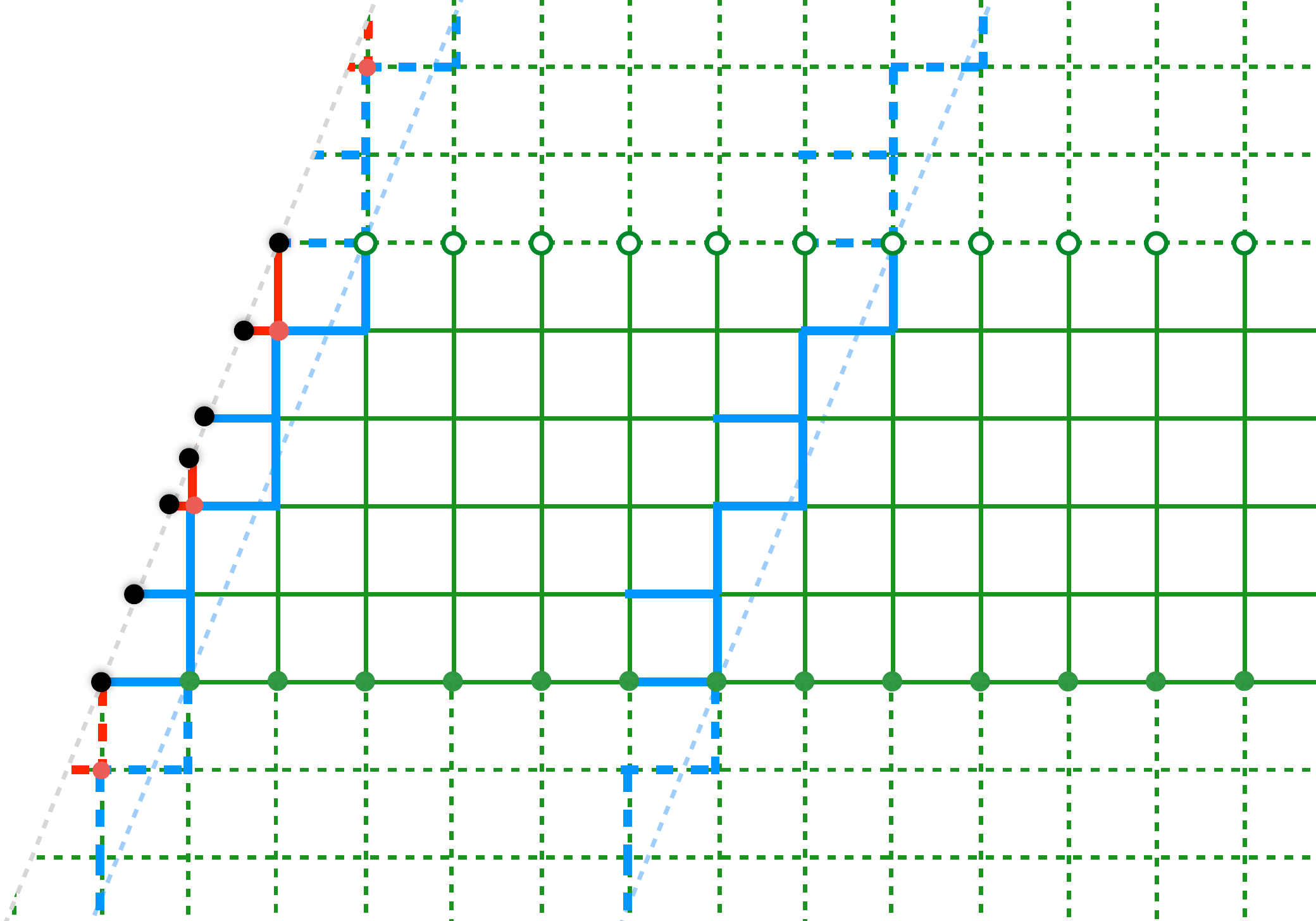}}}
\centerline{\scalebox{0.3}{\includegraphics{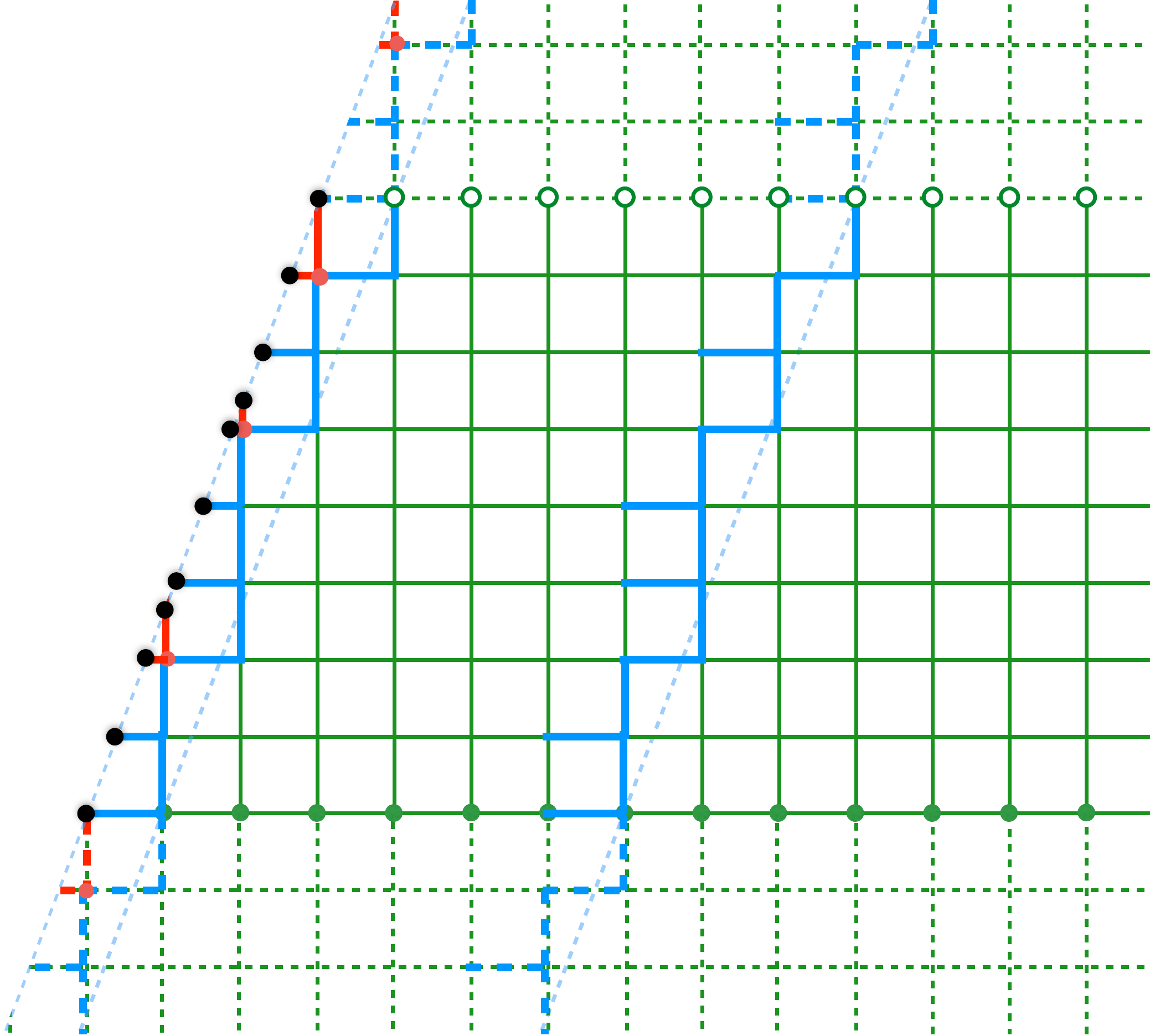}}}
\centerline{\scalebox{0.3}{\includegraphics{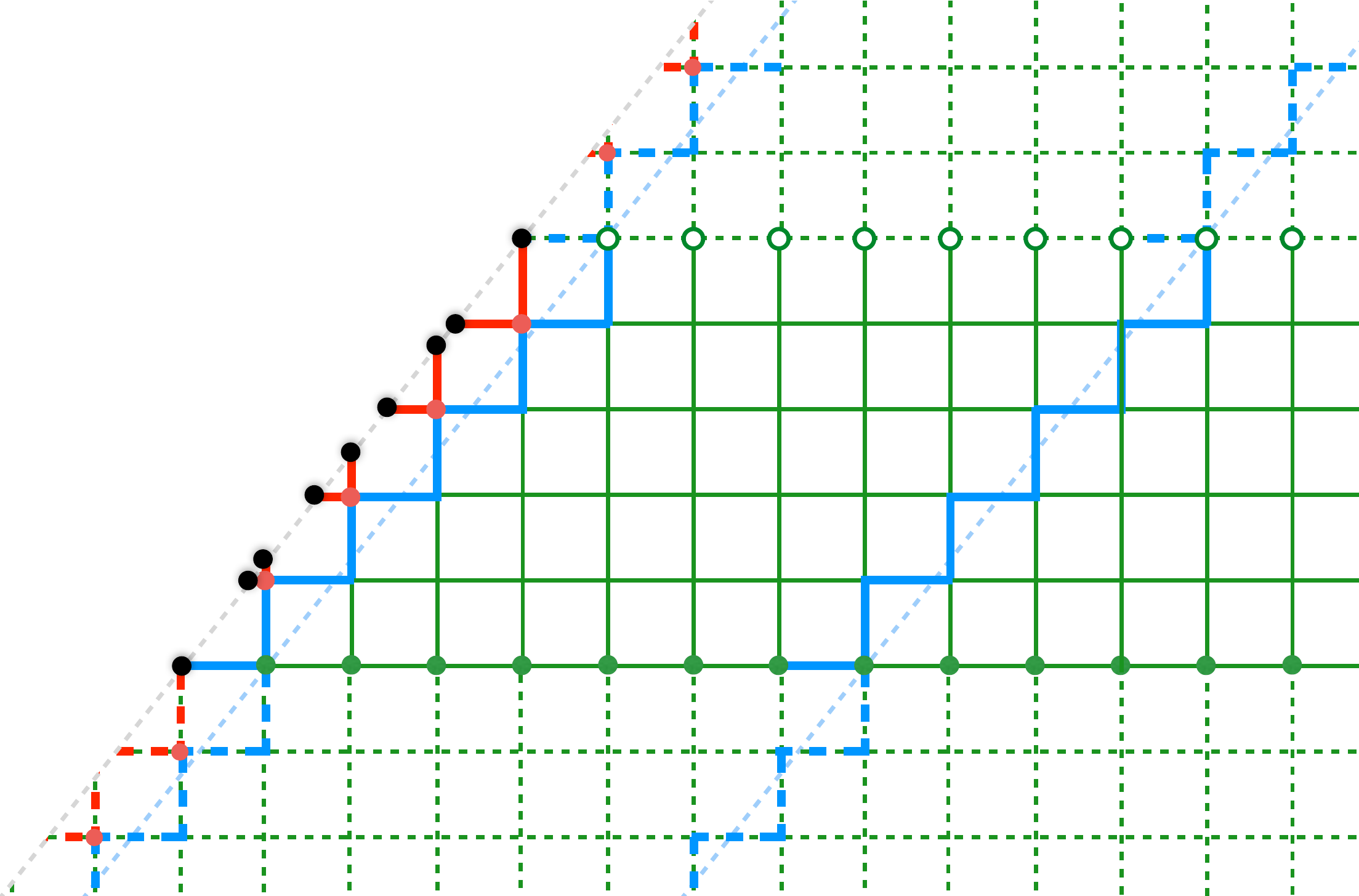}}}

We do so by allowing $\alpha$ auxillary modes as pictured along the red edges above

Note that the vertex conditions are still satisfied at the inside vertex of these edges with auxillary motion. The auxillary motion is $c_i \sin{kx}$ and $\-c_i \sin{kx}$ on the horizontal and vertical edges respectively, parametrized so that $x = 0$ at the interior vertex and approaches 1 as one moves along the red edge towards the edge at or past the boundary. At $x = 0$, this will just be 0 and 0, meaning matching will still be satisfied. The derivatives will be negatives of each other, so the total flux will still be 0.

Now we can create a linear system of equations that allow us to solve for these unknowns with some set of conditions encoded by the unitary matrix $U$. Let $X_i$ be the vector that is of the form 

$[0, 0, ... , \sin{kx_1}, -\sin{kx _2} , 0 , ... 0]^T$
   
So that it encodes the sine motion only on the corresponding auxillary edge at the boundary points encoded by an appropriate $x_1, x_2 \in [0, 1]$. Also let $F_i$ refer to the motion encoded by some ${z_1, z_2}_i$ Floquet multiplier where by convention the first $\beta$ solutions are rightward or response motions and the last $\beta$ solutions are the inverse leftward or source coefficients.

$([U-I] * [F_1, F_2, ...,F_\beta, X_1, 	... X_\alpha]  + i[U+I]* [F_1', F_2', ...,F_\beta', X_1, 	... X_\alpha])* 
\begin{bmatrix}
    R_1 \\
    R_2 \\
    \vdots     \\
    R_\beta \\
    C_1 \\
    C_2 \\
    \vdots     \\
    C_\alpha
\end{bmatrix}
$


 =  $\sum_{n=1}^{\beta} J_n([-U+I] * [F_{n+\beta}] + i(-U-I)*[F_{n+\beta}']$

\section{Conclusion and Further Work}

We must additionally consider when the matrix equation above has a solution when all the source coefficients $J_n$ are zero, or when the matrix multiplying our unknowns has a determinant of $0$. This can most likely be seen with experimental computations when the unitary matrix has some values that can be perturbed without violating the property that $U$ is unitary. If such a setup is possible, we have found a system of trapped energy, which implies that small perturbations may create a system with resonant interactions.

However, we can solve this system in general for an arbitrary input of energy to find how the system will react to predetermined boundary conditions encoded by $U$. Thus we can create a physically sensible mathematical model for a half-cut tube.

If the above is possible, perhaps a similar method of study could be applied to carbon nanotubes. This may lead the way to allowing resonant interactions in these often-studied structures. So if a resonant setup can be purposefully created, the naturally sensitive nature of a resonant system may make it possible to create very fast on-off switches in computers where the carbon nanotubes (which are semiconductors) can replace silicon. Of course, this is a far-off goal, but worth considering. 

\bigskip

\bibliography{QGTubes}

\end{document}